\newcommand{\lc}{\left\lceil}
\newcommand{\rc}{\right\rceil}
\newtheorem{theorem}{Theorem}
\newtheorem{lemma}{Lemma}
\newcommand{\A}{\mathcal{A}}
\newcommand{\post}{\mathit{Post}}
\newcommand{\steps}[0]{ {\rightarrow} }
\newtheorem{definition}{Definition}
\title{Parametric Schedulability Analysis of Fixed Priority Real-Time
  Distributed Systems}
\author{ \IEEEauthorblockN{Youcheng Sun\IEEEauthorrefmark{1}, 
    Romain Soulat\IEEEauthorrefmark{2}, 
    Giuseppe Lipari\IEEEauthorrefmark{3}, 
    Étienne André\IEEEauthorrefmark{4}, 
    Laurent Fribourg\IEEEauthorrefmark{2}}
  \IEEEauthorblockA{\IEEEauthorrefmark{1}Scuola Superiore Sant'Anna,
    Italy\\y.sun@sssup.it}
  \IEEEauthorblockA{\IEEEauthorrefmark{3}LSV - ENS Cachan, France\\
    g.lipari@lsv.ens-cachan.fr}
  \IEEEauthorblockA{\IEEEauthorrefmark{4}LIPN, CNRS UMR 7030,
    Université Paris 13, France\\Etienne.Andre@univ-paris13.fr} 
  \IEEEauthorblockA{\IEEEauthorrefmark{2}LSV - ENS Cachan and CNRS,
    France\\\{soulat,fribourg\}@lsv.ens-cachan.fr}}
\begin{document}

\maketitle

\begin{abstract}
  Parametric analysis is a powerful tool for designing modern embedded
  systems, because it permits to explore the space of design
  parameters, and to check the robustness of the system with respect
  to variations of some uncontrollable variable.  In this paper, we
  address the problem of parametric schedulability analysis of
  distributed real-time systems scheduled by fixed priority. In
  particular, we propose two different approaches to parametric
  analysis: the first one is a novel technique based on classical
  schedulability analysis, whereas the second approach is based on
  model checking of Parametric Timed Automata (PTA).
  
  The proposed analytic method extends existing sensitivity analysis
  for single processors to the case of a distributed system,
  supporting preemptive and non-preemptive scheduling, jitters and
  unconstrained deadlines. Parametric Timed Automata are used to model
  all possible behaviours of a distributed system, and therefore it is
  a necessary and sufficient analysis. Both techniques have been
  implemented in two software tools, and they have been compared with
  classical holistic analysis on two meaningful test cases. The
  results show that the analytic method provides results similar to
  classical holistic analysis in a very efficient way, whereas the PTA
  approach is slower but covers the entire space of solutions.
\end{abstract}

\section{Introduction and motivation}
\label{sec:intr-motiv}

Designing and analysing a distributed real-time system is a very
challenging task. The main source of complexity arises from the large
number of parameters to consider: task priority, computation times and
deadlines, synchronisation, precedence and communication constraints,
etc. Finding the ``optimal'' values for the parameters is not easy,
and often the robustness of the solution strongly depends on the exact
values: a small change in one parameter may completely change the
behaviour of the system and even compromise the correctness. For these
reasons, designers are looking for analysis methodologies that enable
incremental design and exploration of the space of parameters.

Task computation times are particularly important parameters. In
modern processor architectures, it is very difficult to precisely
compute worst-case computation times of tasks, and estimations derived
by previous executions are often used in the analysis. However,
estimations may turn out to be optimistic, hence an error in the
estimation of a worst-case execution time may compromise the
schedulability of the system.

The goal of this research is to characterise the space of the
parameters of a real-time system for which the system is schedulable,
i.e. all tasks meet their deadlines. Parametric analyses for real-time
systems have been proposed in the past, especially on single
processors \cite{Bini-thesis,Bin07b,Palopoli-rtss08,AFKS12}.

In this paper, we investigate the problem of doing parametric analysis
of real-time distributed systems scheduled by fixed priority. We
consider an application modelled by a set of pipelines of tasks (also
called \emph{transactions} in \cite{Tindell:1994:HSA:195612.195618}),
where each pipeline is a sequence of tasks that can be periodic or
sporadic, and all tasks in a pipeline must complete before an
end-to-end deadline. We consider that all nodes in the distributed
system are connected by one or more CAN bus \cite{Davis07}.

We propose:
\begin{itemize}
\item a new method for doing parametric analysis of distributed
  real-time systems scheduled by fixed priority scheduling. The method
  extends the sensitivity analysis proposed by Bini et
  al. \cite{Bin07b,Bini-thesis} by considering distributed systems and
  non-preemptive scheduling.
\item a model of a distributed real-time system using parametric timed
  Automata, and a model checking methodology using the Inverse Method
  \cite{AS13,AFKS12,FLMS-time12};
\item comparison of these two approaches with classical holistic
  analysis using the MAST tool \cite{MAST-2001,MAST-web-page}, in
  terms of complexity and precision of the analysis.
\end{itemize}

\section{Related Work}
\label{sec:rel-work}

There has been a lot of research work on parametric schedulability
analysis, especially on single processor systems. Bini and Buttazzo
\cite{Bin04b} proposed an analysis of fixed priority single processor
systems based on Lehoczky test \cite{Leh89}. Later, Bini, Di Natale
and Buttazzo \cite{Bin07b} proposed a more complex analysis, which
considers also the task periods as parameters. Such results are
summarised and extended in Bini's PhD thesis \cite{Bini-thesis}.

Parameter sensitivity can be also be carried out by repeatedly
applying classical schedulability tests, like the holistic analysis
\cite{Palencia1998,Tindell:1994:HSA:195612.195618}. One example of
this approach is used in the MAST tool \cite{MAST-2001,MAST-web-page},
in which it is possible to compute the \emph{slack} (i.e. the
percentage of variation) with respect to one parameter for single
processor and for distributed systems by applying binary search in
that parameter space \cite{Palencia1998}.

A similar approach is followed by the SymTA/S tool \cite{Symtas-05},
which is based on the \emph{event-stream} model
\cite{Richter:2002:EMI:882452.874327}. Another interesting approach is
the Modular Performance Analysis (MPA)
\cite{Wandeler:2006:SAE:1177177.1177184} which is based on Real-Time
Calculus \cite{thiele2000real}. In both cases, the analysis is
compositional, therefore less complex than the holistic analysis;
nevertheless, these approaches are not fully parametric, in the sense
that it is necessary to repeat the analysis for every combination of
parameters values in order to obtain the schedulability region.

Model checking on \emph{Parametric Timed Automata} (PTA) can be used
for parametric schedulability analysis, as proposed by Cimatti,
Palopoli and Ramadian \cite{Palopoli-rtss08}. In particular, thanks to
generality of the PTA modelling language, it is possible to model a
larger class of constraints, and perform parametric analysis on many
different variables, for example task offsets. Their approach has been
recently extended to distributed real-time systems \cite{Pal13}.

Also based on PTA is the approach proposed by André et
al. \cite{AFKS12}. Their work is based on the Inverse Method
\cite{AS13} and it is very general because it permits to perform
analysis on any system parameter. However, this generality can be paid
in terms of complexity.

In this paper, we first propose an extensions of the methods in
\cite{Bini-thesis} for distributed real-time systems. We also propose
a model of a distributed real-time systems in PTA, and compare the two
approaches against classical holistic analysis.

\section{System model}
\label{sec:system-model}

We consider distributed real-time systems consisting of several
computational nodes, each one hosting one single processor, connected
by one or more shared networks. We consider preemptive fixed priority
scheduling for processors, as this is the most popular scheduling
algorithm used in industry today, and non-preemptive fixed priority
scheduling for networks. In particular, the CAN bus protocol is a very
popular network protocol that can be analysed using non-preemptive
fixed priority scheduling analysis \cite{Davis07}. We will consider
extensions to our methodology to other scheduling algorithms and
protocols in future works.

For the sake of simplicity and uniformity of notation, in this paper
we use the same terminology to denote processors and communication
networks, and tasks and messages. Therefore, without loss of
generality, from now on we will use the term \emph{task} to denote
both tasks and messages, and the term \emph{processor} to denote both
processors and networks

A distributed real-time system consists of a set of task pipelines
$\{\mathcal{P}^1, \ldots, \mathcal{P}^n\}$ to be executed on a set of
$m$ processors $\{p_1, p_2, \ldots, p_m\}$. In order to simplify the
notation, in the following we sometime drop the pipeline index when
there is no possibility of misinterpretation.

A \emph{pipeline} is a chain of tasks $\mathcal{P} = \{\tau_1, \ldots,
\tau_n\}$, and each task is allocated on one possibly different
processor. A pipeline is assigned two fixed parameters: $T$ is the
pipeline period, and $D_{e2e}$ is the end-to-end deadline. This means
that the first task of the pipeline is activated every $T$ units of
time, and every activation is an \emph{instance} (or \emph{job}) of
the task. We denote the $k$-th instance of task $\tau_i$ as
$\tau_{i,k}$. Every successive task in the pipeline is activated when
the corresponding instance of the previous task has completed;
finally, the last task must complete before $D_{e2e}$ units of time
from the activation of the first task. Therefore, tasks must be
executed in a sequence: job $\tau_{i,k}$ cannot start executing before
job $\tau_{i-1,k}$ has completed.

A task can be a piece of code to be executed on a CPU, or a message to
be sent on a network. More precisely, a real-time periodic task
$\tau_i = (C_i, T_i, D_i, \overline{D}_i, \pi_i, J_i)$ is modelled by
the following fixed parameters:
\begin{itemize}
\item $T_i$ is the task period. All tasks in the same pipeline have
  period equal to the pipeline period $T$;
\item $\pi_i$ is the task priority; the higher $\pi_i$, the larger the
  priority;
\item $\overline{D}_i$ is the task \emph{fixed deadline}; all jobs of
  $\tau_i$ must complete within $\overline{D}_i$ from their
  activation.
\end{itemize}
Also, a task has the following free parameters:
\begin{itemize}
\item $C_i$ is the worst-case computation time (or worst-case
  transmission time, in case it models a message). In this paper we
  want to characterise the schedulability of the system in the space
  of the computation times, so $C_i$ is a free parameter.
\item $D_i$ is a variable denoting an upper bound on the task
  worst-case completion time. We will call this variable \emph{actual
    task deadline} or simply \emph{task deadline}. Of course, we
  require that $D_i \leq \overline{D}_i$. Remember that fixed priority
  does not use the task deadline for scheduling, but just for
  schedulability analysis. As we will see later, we will use this
  variable for imposing precedence constraints on pipelines. We say
  that a task has constrained deadline when $D_i \leq T_i$, and
  unconstrained deadline when $D_i > T_i$.
\item $J_i$ is the task start time jitter (see below).
\end{itemize}

As anticipated, a task consists of an infinite number of jobs
$\tau_{i,k}, k=1,\ldots$. Each job is activated at time $a_{i,k} =
kT_i$, can start executing (or can be sent on the network) no earlier
than time $s_{i,k}$, with $a_{i,k} \leq s_{i,k} \leq a_{i,k} + J_i$,
executes (or is transmitted over the network) for $c_{i,k} \leq C_i$
units of time, and completes (or is received) at $f_{i,k}$. For the
task to be schedulable, it must be $\forall k, f_{i,k} \leq d_{i,k} =
a_{i,k} + D_i$. A sporadic task has the same parameters as a periodic
task, but parameter $T_i$ denotes the \emph{minimum inter-arrival time}
between two consecutive instances. We also define the $i$-th level
hyperperiod as $H_i = \mathsf{lcm}(T_1, \ldots, T_i)$.

In this paper, we use the following convention. All tasks belonging to
a pipeline $\mathcal{P} = \{\tau_1, \ldots, \tau_n\}$ are activated at
the same time $a_{i,k} = a_{1,k}$. However, only the first task can
start executing immediately: $s_{1,k} = a_{1,k}$. The following tasks
can only start executing when the previous task has completed:
$\forall i=2,\ldots,n \;\; s_{i,k} = f_{i-1, k}$. The task jitter is
the worst case start time of a task: $J_i \geq \max_k \{s_{i,k} -
a_{i,k}\}$.



A scheduling algorithm is \emph{fully preemptive} if the execution of
a lower priority job can be suspended at any instant by the arrival of
a higher priority job, which is then executed in its place. A
scheduling algorithm is \emph{non-preemptive} if a lower priority job
can complete its execution regardless of the arrival of higher
priority jobs. In this paper, we consider preemptive fixed priority
scheduling for CPUs, and non-preemptive fixed priority scheduling for
networks.

\section{Analytic method}
\label{sec:analytic}

In this section we describe a novel method for parametric analysis of
distributed system. The method is based on the sensitivity analysis by
Bini et al. \cite{Bin07b,Bini-thesis}, and extends it to include
jitter and deadline parameters.

\subsection{Single processor preemptive fixed priority scheduling}

There are many ways to test the schedulability of a set of real-time
periodic tasks scheduled by fixed priority on a single processor. In
the following, we will use the test proposed by Seto et
al. \cite{seto98} because it is amenable to parametric analysis of
computation times, jitters and deadlines.

With respect to the original formulation, we now consider tasks with
constrained deadlines (i.e. $D_i$ can be less than or equal to $T_i$).
\begin{theorem}
  \label{th:setos-test}
  Consider a system of sporadic tasks $\{\tau_1, \ldots, \tau_n\}$
  with constrained deadlines and zero jitter, executed on a single
  processor by a fixed priority scheduler. Assume all tasks are
  ordered in decreasing order of priorities, with $\tau_1$ being the
  highest priority task.

  Task $\tau_i$ is schedulable if and only if:
  \begin{equation}
    \label{eq:seto}
    \exists \mathbf{n} \in \mathbb{N}^{i-1}  \left \{ 
      \begin{array}{ll}
        C_i + \sum_{j=1}^{i-1} n_j C_j \leq n_k T_k  & \forall k = 1, \ldots, i-1\\
        C_i + \sum_{j=1}^{i-1} n_j C_j \leq D_i
      \end{array} \right.
  \end{equation}
  where $\mathbb{N}^{i-1}$ is the set of all possible vectors of $(i-1)$
  positive integers.
\end{theorem}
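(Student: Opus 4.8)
The plan is to prove both directions of the equivalence by analysing the worst-case response time of $\tau_i$ under the critical instant, following the classical Liu–Layland/Lehoczky argument but in the form given by Seto et al.

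\medskip

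\noindent\textbf{Setup and critical instant.}
First I would recall that, for sporadic (or periodic) tasks with constrained deadlines and zero jitter under fixed priority, the worst-case response time of $\tau_i$ occurs for the job released at the \emph{critical instant}, i.e.\ when $\tau_i$ is released simultaneously with a job of every higher-priority task $\tau_1,\dots,\tau_{i-1}$, and every such task is subsequently re-released as early as possible (every $T_j$ time units). This is the standard starting point and I would invoke it as known. Under this scenario, the cumulative processor demand in an interval of length $t$ after the critical instant, counting $\tau_i$ itself once and all higher-priority interference, is
\begin{equation}
  W_i(t) = C_i + \sum_{j=1}^{i-1} \left\lceil \frac{t}{T_j} \right\rceil C_j .
\end{equation}
Task $\tau_i$ meets its deadline iff there is some $t \le D_i$ with $W_i(t) \le t$; equivalently, the least fixed point $R_i$ of $t = W_i(t)$ satisfies $R_i \le D_i$. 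The whole proof then consists of showing that the existence of such a $t$ is equivalent to the existence of the integer vector $\mathbf{n}$ in \eqref{eq:seto}.

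\medskip

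\noindent\textbf{From a feasible instant to the integer vector ($\Leftarrow$ of schedulability, i.e.\ sufficiency of \eqref{eq:seto}).}
Suppose \eqref{eq:seto} holds for some $\mathbf{n}\in\mathbb{N}^{i-1}$. I would set $t^\star = C_i + \sum_{j=1}^{i-1} n_j C_j$. The second inequality of \eqref{eq:seto} gives $t^\star \le D_i$. For the schedulability condition I need $W_i(t^\star) \le t^\star$, i.e.\ $\lceil t^\star / T_k \rceil \le n_k$ for every $k \le i-1$; but the first family of inequalities says exactly $t^\star \le n_k T_k$, hence $t^\star/T_k \le n_k$ and, since $n_k$ is an integer, $\lceil t^\star/T_k\rceil \le n_k$. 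Therefore $W_i(t^\star) = C_i + \sum_j \lceil t^\star/T_j\rceil C_j \le C_i + \sum_j n_j C_j = t^\star \le D_i$, so $\tau_i$ is schedulable.

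\medskip

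\noindent\textbf{From schedulability to the integer vector ($\Rightarrow$).}
Conversely, assume $\tau_i$ is schedulable, so there exists $t \le D_i$ with $W_i(t) \le t$. Define $n_k = \lceil t / T_k \rceil$ for $k=1,\dots,i-1$; these are positive integers (positive because $t>0$ as $t\ge W_i(t)\ge C_i>0$). Then $W_i(t) = C_i + \sum_j n_j C_j$, and the hypothesis $W_i(t)\le t \le D_i$ immediately gives the second inequality of \eqref{eq:seto}. For the first family, note $n_k = \lceil t/T_k\rceil \ge t/T_k$, so $n_k T_k \ge t \ge W_i(t) = C_i + \sum_j n_j C_j$, which is exactly the $k$-th inequality. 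Hence \eqref{eq:seto} holds with this choice of $\mathbf{n}$.

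\medskip

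\noindent\textbf{Main obstacle.}
The routine algebra above is easy; the real content — and the step I would be most careful about — is justifying that checking the demand condition only at the critical instant is \emph{necessary and sufficient}, i.e.\ that the synchronous release genuinely maximises interference for sporadic tasks with constrained deadlines, and that it suffices to test a single $t \le D_i$ rather than all busy-period instants. I would either cite Seto et al.\ (and the underlying Lehoczky/Audsley results) for the critical-instant theorem, or sketch the exchange argument: shifting any higher-priority release earlier cannot decrease $\lceil t/T_j\rceil$, and restricting to $t\le D_i$ is legitimate because, with $D_i \le T_i$, the next job of $\tau_i$ is not released before the deadline, so the level-$i$ busy period relevant to this job ends by $D_i$. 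Handling the boundary cases (e.g.\ $t = n_k T_k$ exactly, ceilings at integer points) is a minor subtlety worth stating explicitly but presents no genuine difficulty.
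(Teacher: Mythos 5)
Your proof is correct and is essentially the standard Seto/Lehoczky argument: the paper itself gives no proof, deferring entirely to \cite{Bini-thesis} and \cite{seto98}, and what you reconstruct (critical-instant reduction to the demand criterion $\exists t \le D_i,\ W_i(t) \le t$, then the two-way translation between a witness $t$ and the integer vector $\mathbf{n}$ via $n_k = \lceil t/T_k\rceil$) is exactly the argument contained in those references. Your identification of the critical-instant theorem as the step carrying the real logical weight, rather than the routine algebra, is also the right assessment.
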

\begin{proof}
  See \cite{Bini-thesis} and \cite{seto98}.
\end{proof}
Notice that, with respect to the original formulation, we have
separated the case of $k=i$ from the rest of the inequalities.

The theorem allows us to only consider sets of linear inequalities,
because the non-linearity has been encoded in the variables $n_j$. The
resulting system is a set of inequalities in disjunctive and
conjunctive form. Geometrically, this corresponds to a non-convex
polyhedron in the space of the variables $C_i, D_i$.

How many vectors $\mathbf{n}$ do we have to consider? If the deadline
$D_i$ is known, the answer is to simply consider all vectors
corresponding to the minimal set of scheduling points by Bini and
Buttazzo \cite{Bini-2004-ID20}.  If $D_i$ is unknown, we have to
consider many more vectors: more specifically, we must select all
multiples of the period of any task $\tau_j$ with priority higher than
$\tau_i$, until the maximum possible value of the deadline. All
vectors until time $t$ can be computed as:
\begin{equation}
\label{eq:number_points}
\mathcal{B}^{i-1}(t) = \left \{ \mathbf{n} \;| \;\exists k, h, kT_h \leq
  t : \forall j, \; n_j = \lc \frac{kT_h}{T_j} \rc \right \}.
\end{equation}
If a task is part of a pipeline with end-to-end deadline equal to
$D_{e2e}$, then $D_i \leq D_{e2e}$ (keep in mind that, by now, the
deadline is supposed to not exceed the task period). Therefore, we
have to check all $\mathbf{n} \in \mathcal{B}^{i-1}(D_{e2e})$.

The number of vectors (and correspondingly, the number of
inequalities) depends on the relationship between the task periods. In
real applications, we expect the periods to have ``nice''
relationships: for example, in many cases engineers choose periods
that are multiples of each others. Therefore, we expect the set of
inequalities to have manageable size for realistic problems.

We have one such non-convex region for every task $\tau_i$.  Since we
have to check the schedulability of all tasks on a CPU, we must
\emph{intersect} all such regions to obtain the final region of
schedulable parameters.

\subsection{Unconstrained deadlines and jitters}

We now extend Seto's test to unconstrained deadlines and variable
jitters. When considering a task with deadline greater than period,
the worst-case response time may be found in any instance, not
necessarily in the first one (as with the classical model of
constrained deadline tasks). Therefore, we have to check the workload
not only of the first job, but also of the following jobs of
$\tau_i$. Let use define $h_i = \frac{H_i}{T_i}$, i.e. the number of
jobs of $\tau_i$ contained in the $i$-level hyperperiod.  Then, task
$\tau_i$ is schedulable if and only if the following system of
inequalities is verified:
\begin{align}
  \label{eq:h-jobs}
  \forall& h = 1, \ldots, h_i, \exists \mathbf{n} \in
  \mathbb{B}^{i-1}(hT_i + D_{e2e}) \\ \nonumber
  & \left \{
    \begin{aligned}
      h C_i &+ \sum_{j=1}^{i-1} n_j C_j \leq n_k T_k, \forall k=1,\ldots,i-1 \\
      h C_i &+ \sum_{j=1}^{i-1} n_j C_j \leq (h-1) T_i + D_i
    \end{aligned} \right.
\end{align}

The correctness of the test is proved by the following Lemma.

\begin{lemma}
  \label{lm:D-g-T-equivalence}
  Consider a system $\mathcal{T} = \{\tau_1, \ldots, \tau_{i-1},
  \tau_i\}$. Let $\mathcal{T}^{(h)}$ be a task set obtained from
  $\mathcal{T}$ by substituting $\tau_i$ with $\tau_i^{(h)}$ having
  computation time $C_i^{(h)} = h C_i$, deadline $D_i^{(h)} =
  (h-1) T_i + D_i$ and the same priority $\pi_i^{(h)} = \pi_i$.

  If for every $h = 1, \ldots, h_i$, task $\tau_i^{(h)}$
  completes before its deadline, then the first $h_i$ jobs of
  $\tau_i$ will also complete before their deadlines.
\end{lemma}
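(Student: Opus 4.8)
The plan is to prove this by relating the response time of the $h$-th job of $\tau_i$ (in the original system $\mathcal{T}$) to the response time of the single "inflated" task $\tau_i^{(h)}$ (in the system $\mathcal{T}^{(h)}$). The key observation is that the worst-case scenario for the $h$-th job of $\tau_i$ occurs at a critical instant where $\tau_i$ and all higher-priority tasks $\tau_1, \ldots, \tau_{i-1}$ are released simultaneously. Under that scenario, the finishing time of the $h$-th job of $\tau_i$ is governed by the point $t$ at which the cumulative demand — namely $h C_i$ from the first $h$ jobs of $\tau_i$ plus the interference $\sum_{j=1}^{i-1} \lceil t/T_j \rceil C_j$ from higher-priority tasks — first equals $t$.

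**Main steps:**

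First I would invoke the synchronous-release critical instant argument (standard for constrained- and unconstrained-deadline fixed priority): the worst-case completion time of job $\tau_{i,h}$ is attained when $\tau_i$ and every higher-priority task are released together at time $0$, and every higher-priority task is released as often as possible thereafter. Second, I would write down the level-$i$ busy period equation: the completion time $f$ of the $h$-th job satisfies $f = h C_i + \sum_{j=1}^{i-1}\lceil f/T_j\rceil C_j$ (taking the least fixed point), provided the busy period has not ended before the $h$-th job — which is exactly the situation we must worry about for unconstrained deadlines, and is handled by checking all $h \le h_i$. Third, I would observe that the right-hand side of this equation is syntactically identical to the level-$i$ busy-period equation for the task $\tau_i^{(h)}$ with computation time $h C_i$, since $\tau_i^{(h)}$ has the same priority and the same set of higher-priority interferers $\tau_1,\ldots,\tau_{i-1}$. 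Hence $\tau_{i,h}$ finishes at the same absolute time $f$ as the single job of $\tau_i^{(h)}$ does. Fourth, I would translate the deadline: job $\tau_{i,h}$ is released at $(h-1)T_i$ and must finish by $(h-1)T_i + D_i$; equivalently $f - (h-1)T_i \le D_i$, i.e. $f \le D_i^{(h)}$, which is precisely the schedulability condition for $\tau_i^{(h)}$. Therefore $\tau_i^{(h)}$ meeting its deadline for every $h=1,\ldots,h_i$ implies every one of the first $h_i$ jobs of $\tau_i$ meets its deadline.

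**The main obstacle:**

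The delicate point is justifying that looking at the first $h_i$ jobs suffices and that the critical-instant (synchronous release) argument remains valid in the unconstrained-deadline setting, where a busy period may span several jobs of $\tau_i$ and the worst case need not be the first job. I would argue that by periodicity of the higher-priority arrival pattern with period $H_{i-1} \mid H_i$, any level-$i$ busy period's behaviour is captured within one $i$-level hyperperiod, so the worst-case response among \emph{all} jobs of $\tau_i$ is realised among the first $h_i$; and that replacing the first $h$ jobs of $\tau_i$ by a single job of size $h C_i$ released at time $0$ is conservative precisely because those $h$ jobs, in the synchronous-release scenario, are processed back-to-back before any of them can be preempted away by a later release of $\tau_i$ (there is no such later release that helps). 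A careful statement of "back-to-back within the busy period" — i.e. that no idle time for priority level $\ge \pi_i$ occurs strictly between the release of $\tau_{i,1}$ and the completion of $\tau_{i,h}$ under the critical instant — is the technical heart; once that is in place, the equality of the two busy-period equations and the deadline translation are routine. I would cite \cite{Bini-thesis} for the underlying busy-period machinery and present the argument as an adaptation of the classical unconstrained-deadline response-time analysis.
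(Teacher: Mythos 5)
Your argument is correct and matches the paper's proof in substance: the paper proceeds by induction on $h$, splitting on whether the first job of $\tau_i^{(k)}$ finishes before $kT_i$ (a fresh critical instant, reducing to the $h=1$ case) or after (no level-$i$ idle time, so job $k+1$'s finishing time coincides with that of the inflated task $\tau_i^{(k+1)}$), which is exactly the busy-period/no-idle-time dichotomy you single out as the technical heart. Your fixed-point formulation and the deadline translation $f \le (h-1)T_i + D_i = D_i^{(h)}$ are the same identification the paper makes, merely packaged without the explicit induction.
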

\begin{proof}
  By induction. Base of induction: the response time of job $h=1$
  corresponds to the response time of the first job of $\tau_i^{(1)}$ (trivially
  true). Therefore, if $\tau_i^{(1)}$ is schedulable, also the first
  job of $\tau_i$ is schedulable.

  Now, the induction step.  Suppose the Lemma is valid for
  $h=1,\ldots, k$, we are now going to prove that is also valid for
  $h=1,\ldots,k+1$. By assumption, the first job of $\tau_i^{(h)}$ is
  schedulable for $h=1,\ldots,k$. As a consequence of the validity of
  the Lemma, also the first $k$ instances of $\tau_i$ are
  schedulable. Let $f_{i,k}$ be the finishing time of the first job of
  $\tau_i^{(k)}$. We have two cases: either $f_{i,k} \leq kT_i$, or
  $kT_i < f_{i,k} \leq (k-1)T_i + D_i$. 

  In the first case, job $k+1$ is only subject to the interference of
  higher priority tasks. Therefore, its worst case response time
  correspond to the situation in which all higher priority tasks
  arrive at the same time $kT_i$ (critical instant), and it is
  therefore equal to the response time of the first job $h=1$, hence
  also schedulable. We can conclude that the Lemma is true without
  further induction steps.

  In the second case, the $k+1$ job has to wait for the previous job
  $k$ to finish before it can start executing. In particular, there is
  no idle time in interval $[0, f_{i,k+1}]$. Therefore, the response
  time of job $k+1$ coincides with the response time of task
  $\tau_i^{(k+1}$, and if the second one is schedulable, also job
  $k+1$ is schedulable.

  Finally, since the first instance of $\tau_i^{(h)}$ is schedulable
  for all $h=1, \ldots, h_i$, and given that $C_i^{(h)} = h C_i$, then
  from Theorem \ref{th:setos-test} follows that the system of
  Inequalities in (\ref{eq:h-jobs}) is verified.
\end{proof}

To take into account the task jitter, we can appropriately adjust the
last term that accounts for the task deadline, and the set
$\mathcal{B}^{i-1}(t)$.
\begin{theorem}\label{thm:jitter}
  Task $\tau_i$ is schedulable if:
  \begin{align}
    \label{eq:final-single-proc}
    \forall &h = 1, \ldots, \frac{H_i}{T_i}, \;\; \exists \mathbf{n} \in \mathbb{B}^{i-1}(D_{e2e}) \;\; \\
    &\left \{
      \begin{aligned}
        h C_i &+ \sum_{j=1}^{i-1} n_j C_j \leq n_k T_k - J_k \; \forall k = 1, \ldots, i-1\\
        h C_i &+ \sum_{j=1}^{i-1} n_j C_j \leq (h - 1) T_i + D_i - J_i        
      \end{aligned} \right.
  \end{align}
  where 
  \[
  \mathcal{B}^{i-1}(t) = \left \{ \mathbf{n} \;| \; \exists k, h, kT_h-\overline{D}_h \leq
    t : \forall j n_j = \lc \frac{kT_h+\overline{D_h}}{T_j} \rc \right \}.
  \]
\end{theorem}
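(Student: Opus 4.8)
The plan is to reduce Theorem~\ref{thm:jitter} to Lemma~\ref{lm:D-g-T-equivalence} and Theorem~\ref{th:setos-test} by folding the two jitters into, respectively, the interference term of the higher-priority tasks and the deadline budget of $\tau_i$, and then to argue that the enlarged candidate set $\mathcal{B}^{i-1}(t)$ still contains a sufficient collection of scheduling points.

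First I would reuse the job-splitting device of Lemma~\ref{lm:D-g-T-equivalence}: it suffices to show, for every $h = 1,\dots,H_i/T_i$, that the first job of the surrogate task $\tau_i^{(h)}$ (computation time $hC_i$, relative deadline $(h-1)T_i+D_i$, priority $\pi_i$) meets its deadline, since the first $H_i/T_i$ jobs of $\tau_i$ exhaust one $i$-level hyperperiod. The only point of that lemma's induction that needs re-checking is the ``no idle time in $[0,f_{i,k+1}]$'' step: with a nonzero $J_i$ the processor may idle while $\tau_i$'s next job is still held back by its jitter, but this merely postpones the start of the relevant busy window (and the postponed window is no worse than a fresh critical instant), so the argument survives provided the right-hand side budget is shrunk by $J_i$, as discussed below.

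Next, for a fixed $h$ and a fixed vector $\mathbf{n}$, I would recover an inequality of Seto type by locating the classical critical instant, now in the presence of jitter. On the higher-priority side, the worst case releases the first job of each $\tau_k$ so that its (jittered) ready time coincides with the start of the busy period while all subsequent jobs arrive as early as possible; equivalently, within a window of length $t$ the task $\tau_k$ contributes $\lceil (t+J_k)/T_k\rceil C_k$, so the assertion ``$n_k$ jobs of $\tau_k$ are enough'' becomes $t\le n_kT_k-J_k$ --- this is exactly the shift from $n_kT_k$ to $n_kT_k-J_k$ in~(\ref{eq:final-single-proc}). On $\tau_i$'s own side, its job is released at the critical instant but may become ready only $J_i$ later, so the busy period that must be emptied before the deadline effectively loses $J_i$ units of budget; combined with the surrogate deadline $(h-1)T_i+D_i$ this yields the bound $(h-1)T_i+D_i-J_i$. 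Comparing the workload $hC_i+\sum_{j} n_jC_j$ against these two bounds gives the displayed system, and Theorem~\ref{th:setos-test} applied to $\tau_i^{(h)}$ closes this part.

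Finally I would justify the candidate set. Because each task completes within its fixed deadline, the start jitter inherited by a downstream task satisfies $J_h\le\overline{D}_h$; hence the instants at which the higher-priority workload can step up are of the form $kT_h$ advanced by at most $\overline{D}_h$, which is precisely what the guard $kT_h-\overline{D}_h\le t$ and the ceiling $\lceil (kT_h+\overline{D}_h)/T_j\rceil$ enumerate in $\mathcal{B}^{i-1}(t)$. Since $\mathcal{B}^{i-1}(D_{e2e})$ then over-approximates the true set of scheduling points for \emph{every} admissible jitter assignment, exhibiting one $\mathbf{n}$ in it that satisfies~(\ref{eq:final-single-proc}) is enough for schedulability --- which is why the statement is a sufficient (``if'') condition and not the ``iff'' of Theorem~\ref{th:setos-test}. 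I expect the main obstacle to be this last step: showing carefully that widening the enumeration window by the fixed deadlines $\overline{D}_h$, rather than by the a priori unknown actual jitters, keeps the candidate set simultaneously finite and sufficient, and that, as noted above, $\tau_i$'s own jitter does not invalidate the busy-period (``no idle time'') reasoning borrowed from Lemma~\ref{lm:D-g-T-equivalence}.
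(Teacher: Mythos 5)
Your proposal follows essentially the same route as the paper's own (very brief) proof sketch: align each higher-priority task's first, maximally jittered release with the critical instant so that the scheduling points shift from $n_kT_k$ to $n_kT_k-J_k$, and charge $\tau_i$'s own worst-case start delay $J_i$ against its deadline budget to obtain $(h-1)T_i+D_i-J_i$. You in fact supply more detail than the paper does --- notably the explicit reduction via Lemma~\ref{lm:D-g-T-equivalence} and Theorem~\ref{th:setos-test}, and the justification that enlarging the candidate set $\mathcal{B}^{i-1}(D_{e2e})$ by the fixed deadlines $\overline{D}_h$ preserves sufficiency --- so the approach matches and no gap is introduced.
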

\begin{proof}
  We report here a sketch of the complete proof. For every higher
  priority interfering task $\tau_k$, the worst case situation is when
  the first instance arrives at $J_k$, whereas the following instances
  arrive as soon as it is possible. Therefore, the scheduling points
  must be modified from $n_k T_k$ to $n_k T_k - J_k$. For what
  concerns task $\tau_i$, the critical instant corresponds to the
  situation in which the first instance can only start at $J_i$, hence
  the available interval is $(h - 1) T_i + D_i - J_i$.
\end{proof}

Notice that the introduction of unconstrained deadline adds a great
amount of complexity to the problem. In particular, the number of
non-convex regions to intersect is now $\mathcal{O}(\sum_{i=1}^n
\frac{H_i}{T_i})$, which is dominated by $\mathcal{O}(n H_n)$. So, the
proposed problem representation does not scale with increasing
hyperperiods; however, as we will show in Section
\ref{sec:experiments}, the problem is tractable when periods are
harmonic or quasi-harmonic, as it often happens in real applications.

\subsection{Non preemptive scheduling}
\label{sec:non-preemptive}

In this paper we model the network as a non-preemptive fixed priority
scheduled resource. 
In non-preemptive fixed priority scheduling, the worst-case response
time for a task $\tau_{i}$ can be found in its longest $i$-level
active period~\cite{Bril:07}. A $i$-level active period $L_i$ is an
interval $[a, b)$ such that the amount of processing that needs to be
performed due to jobs with priority higher than or equal to $\tau_{i}$
(including $\tau_i$ itself) is larger than 0 $\forall t \in (a, b)$,
and equal to 0 at instants $a$ and $b$. The longest $L_i$ can be found
by computing the lowest fixed point of the following recursive
function~\cite{George:96}:
\begin{align}\label{eq:level-i-active-period}
  &\left \{
    \begin{aligned}
      L_{i}^{0} &= B_{i} + C_{i} \\
      L_{i}^{(s)} &= B_{i} + \sum_{j <= i}\lceil \frac{L_{i}^{(s-1)}}{T_{j}} \rceil C_{j} 
    \end{aligned} \right. 
\end{align}
\noindent where $B_i = \max_{i < j}(C_i - 1)$.

In order to find the worst-case response time of task $\tau_i$, all
jobs $\tau_{i,k}$ that appear in the longest $L_i$ need to be checked,
with $ k \in [1, \lceil\frac{L_i}{T_i}\rceil]$.

To obtain the worst-case response time, we compute first its
worst-case start time. When there is no jitter, George et
al. \cite{George:96} give the following formula to compute the
worst-case start time of a job $\tau_{i, k}$ :
\begin{align}\label{rq:start_time}
 &\left \{
  \begin{aligned}
    s_{i,k}^{(0)} &= B_{i} + \sum_{j < i}C_{j}\\
    s_{i,k}^{(l+1)} &= B_{i} + (k-1)C_{i} + \sum_{j < i}
    (\left \lfloor \frac{s_{i,k}^{l}}{T_{j}} \right\rfloor + 1) C_{j}\\
  \end{aligned}\right.
\end{align}
\noindent Note that $(k-1)C_i$ is the computation time of the
preceding $(k-1)$ jobs. Since a lower priority task's execution cannot
be preempted, this could ``push'' one job of a higher priority task to
interfere with its future jobs.

Observe that the iterating computation of $L_{i}$ in
Equation~(\ref{eq:level-i-active-period}) is non decreasing and (when
the system utilisation is no larger than 1) $B_{i} + \sum_{j<=i}
\lceil \frac{H_i}{T_i} \rceil C_i <= B_i +H_i$, so the length of $L_i$
will not exceed $B_i+H_i$.

In this paper, the worst-case execution time of the tasks are
considered free parameters. However, $L_i$ can still be upper bounded
by $\overline{L_i} = \max_{i<j}(T_j) + H_i$. Now, we can derive a
similar feasibility test for non preemptive scheduling as in
Theorem~\ref{thm:jitter}.
\begin{theorem}\label{thm:jitter-np}
  A non preemptive task $\tau_i$ is schedulable if :
  \begin{align}\label{eq:final-single-proc-np}
    &\forall h = 1, \ldots, \lceil\frac{\overline{L_i}}{T_i}\rceil, \;\; \exists \mathbf{n} \in \mathbb{B}^{i-1}(D_{e2e}) \;\;\\
    &\left \{
      \begin{aligned}
        B_{i} &+ (h-1)C_{i} + \sum\limits_{j=1}^{i-1}n_{j}C_{j} 
        \leq n_{l}T_{l} - J_l \;\; \forall l = 1, \ldots, i-1 \\
        B_{i} &+ (h-1)C_{i} + \sum\limits_{j=1}^{i-1}n_{j}C_{j} 
        \le (h-1)T_i + D_{i} - C_{i} - J_i \\
      \end{aligned}\right.
  \end{align}
  where $\mathbb{B}^{i-1}(D_{e2e})$ is defined as in
  Theorem~\ref{thm:jitter}, and $B_i$ is the blocking time that task
  $\tau_i$ suffers from lower priority tasks:
  \[
  \forall i, \forall j>i \;\;  B_i \leq C_j - 1
  \]
\end{theorem}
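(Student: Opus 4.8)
The plan is to adapt the argument behind Theorem~\ref{thm:jitter} to the non-preemptive setting, using the classical characterisation of the non-preemptive worst-case response time in terms of the longest $i$-level active period. First I would recall from~\cite{Bril:07,George:96} that, under non-preemptive fixed priority scheduling, the worst-case response time of $\tau_i$ occurs for one of the jobs released inside its longest $i$-level active period $L_i$, and that, since the $C_i$ are free parameters but the system utilisation is assumed to be at most $1$, this active period is bounded by $\overline{L_i} = \max_{j>i}(T_j) + H_i$. Hence it suffices to prove that every job $\tau_{i,h}$ with $h \in [1, \lceil \overline{L_i}/T_i\rceil]$ completes by its actual deadline $d_{i,h} = (h-1)T_i + D_i$.

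Next, for a fixed $h$ I would start from the worst-case start-time recurrence~(\ref{rq:start_time}), which accounts for (i) the blocking term $B_i \le C_j - 1$ caused by the single lower-priority job that may already be running at the critical instant, (ii) the term $(h-1)C_i$ coming from the $h-1$ preceding, non-preemptable jobs of $\tau_i$ itself, and (iii) the interference $\sum_{j<i}(\lfloor s/T_j\rfloor + 1)C_j$ of the higher-priority tasks. As in the proof of Theorem~\ref{thm:jitter}, the jitter of an interfering task $\tau_l$ is handled by letting its first instance appear as early as $-J_l$ with respect to the critical instant, which shifts its scheduling points from $n_l T_l$ to $n_l T_l - J_l$; symmetrically, the release jitter $J_i$ of $\tau_i$ delays the availability of job $h$ and is therefore subtracted from the interval $(h-1)T_i + D_i$ available for that job. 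The linearisation trick of Theorem~\ref{th:setos-test} then applies: rather than iterating the recurrence, I introduce the vector $\mathbf{n}$ whose component $n_j$ is the number of instances of $\tau_j$ that interfere with $\tau_{i,h}$, require $B_i + (h-1)C_i + \sum_{j<i} n_j C_j \le n_l T_l - J_l$ for every $l$ (so that the count $n_l$ is self-consistent: the $n_l$-th instance of $\tau_l$ is not yet released when the start of $\tau_{i,h}$ is reached), and require that the finish time, obtained by adding $\tau_i$'s own cost $C_i$, meets the deadline, i.e.\ $B_i + (h-1)C_i + \sum_{j<i} n_j C_j \le (h-1)T_i + D_i - C_i - J_i$, which is exactly the system in~(\ref{eq:final-single-proc-np}).

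I would then close the ``if'' direction (the theorem asserts only sufficiency) by the usual monotonicity argument: if such an $\mathbf{n}$ exists, then one step of the start-time recurrence does not increase the value $B_i + (h-1)C_i + \sum_{j<i} n_j C_j$, so the least fixpoint of~(\ref{rq:start_time}) — the true worst-case start time $s_{i,h}$ — is no larger than it (no more than $n_j$ instances of each $\tau_j$ arrive before it, and, exactly as in the second case of the proof of Lemma~\ref{lm:D-g-T-equivalence}, there is no idle instant in the active period up to that time). Hence $f_{i,h} = s_{i,h} + C_i \le (h-1)T_i + D_i = d_{i,h}$; doing this for every $h$ in the active period yields schedulability of $\tau_i$. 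Finally I would justify restricting $\mathbf{n}$ to the discretised set $\mathbb{B}^{i-1}(D_{e2e})$: the floors only change value at the jitter-shifted multiples of the higher-priority periods, the relevant start times never exceed the largest admissible deadline $D_{e2e}$, and monotonicity guarantees that if a feasible real start time exists then one of these discrete scheduling points is feasible as well.

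The main obstacle I expect is the non-preemptive \emph{self-pushing} effect: because neither $\tau_i$ nor the blocking lower-priority job can be preempted, the critical instant is no longer simply ``all higher-priority tasks released together'', and one must argue carefully that (a) the worst case is still captured by some job $h$ inside $\overline{L_i}$, and (b) the $(h-1)C_i$ and $B_i$ terms, together with the no-idle-time property of the active period, correctly bound the start time, i.e.\ that the recurrence's fixpoint is dominated by $B_i + (h-1)C_i + \sum_{j<i} n_j C_j$ whenever the inequalities hold. Getting the off-by-one details right around the blocking term ($C_j - 1$), the strict versus non-strict boundaries of the active period, and the exact placement of the jitter terms is the delicate bookkeeping that the full proof has to carry out.
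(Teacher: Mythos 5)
Your proposal is correct and follows essentially the same route as the paper, whose entire proof is the one-line citation ``See the sufficient part of proof in~\cite{seto98} and Theorem~\ref{thm:jitter}'': you simply spell out that combination --- Seto-style sufficiency via a witness vector $\mathbf{n}$ dominating the fixpoint of the start-time recurrence~(\ref{rq:start_time}), the non-preemptive ingredients ($B_i$, the $(h-1)C_i$ self-pushing term, the bound $\overline{L_i}$ on the active period), and the jitter adjustments inherited from Theorem~\ref{thm:jitter}. The obstacles you flag (critical instant under non-preemption, off-by-one details in the blocking term) are real but are exactly the points the paper also leaves to the cited references.
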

\begin{proof}
  See the sufficient part of proof in~\cite{seto98} and
  Theorem~\ref{thm:jitter}.
\end{proof}
Term $B_i$ is an additional free variable used to model the blocking
time that a task suffers from lower priority tasks. It is possible to
avoid the introduction of this additional variable by substituting it
in the inequalities with a simple Fourier-Motzkin elimination.

Like in the preemptive case, for every non preemptive task, this
theorem builds a set of inequalities. The system schedulability region
is the intersection of all the sets. The complexity of this procedure
is the same as for the preemptive case.
 
\subsection{Distributed systems}
\label{sec:analysis-distr}

Until now, we have considered the parametric analysis of independent
tasks on single processor systems, with computation times, deadlines
and jitter as free parameters. In particular, the equations in Theorem
\ref{thm:jitter} and Theorem \ref{thm:jitter-np} give us a way to
express the constraints on the system in a fully parametric way: all
solutions to the system of Inequalities \eqref{eq:final-single-proc}
and \eqref{eq:final-single-proc-np} are all the combinations of
computations times, deadlines and jitters that make the single
processor system schedulable.

It is important to make one key observation. If we fix the computation
times and the jitters of all tasks, and we leave the deadlines as the
only free variables, the worst-case response time of each task can be
found by minimising the deadline variables. As an example, consider
the following task set (the same as in \cite{Bin04b}) to be scheduled
by preemptive fixed priority scheduling on a single processor:

\begin{center}
  \begin{tabular}{|l|c|c|c|c|} \hline Task & $C_i$ & $T_i$ & $D_i$ &
    $p_i$ \\\hline
    $\tau_1$ &   1   &  3    &  3    &  3    \\
    $\tau_2$ &   2   &  8    &  7    &  2    \\
    $\tau_3$ & 4 & 20 & ?  & 1 \\\hline
  \end{tabular}
\end{center}

We consider $D_3$ as a parameter and set up the system of inequalities
according to Equation \eqref{eq:final-single-proc}. After reduction of
the non-useful constraints, we obtain
\[
   12 \leq D_3 \leq 20
\]
Notice that $12$ is actually the worst-case response time of $\tau_3$.

The second key observation is that a precedence constraint between two
consecutive tasks $\tau_i$ and $\tau_{i+1}$ in the same pipeline can
be expressed as $ D_i \leq J_{i+1}$. This basically means that the
worst-case response time of task $\tau_i$ should never exceed the
jitter (i.e. worst-case start time) of task $\tau_{i+1}$. Therefore,
we have a way to relate tasks allocated on different processors that
belong to the same pipeline. 

Finally, the last task in every pipeline, let us call it $\tau_n$ must
complete before the end-to-end deadline: $D_n \leq D_{e2e}$.

We are now ready use inequalities in \eqref{eq:final-single-proc} as
building blocks for the parametric analysis of distributed
systems. The procedure to build the final system of inequalities is as
follows:
\begin{enumerate}
\item For each processor, we build the system of inequalities
  \eqref{eq:final-single-proc}, and for every network the system of
  inequalities in \eqref{eq:final-single-proc-np}. All these systems
  are independent of each other, because they are constraints on
  different tasks, so they use different variables. The combined
  system contains $3*N$ variables, where $N$ is the total number of
  tasks.
\item For every pipeline, we add the following precedence constraints:
  \begin{itemize}
  \item For the first task in the pipeline, let us denote it as
    $\tau_1$, we set its jitter to 0: $J_1 = 0$.
  \item For every pair of consecutive tasks, let us denote them as
    $\tau_i$ and $\tau_{i+1}$, we impose the precedence constraint:
    $D_i \leq J_{i+1}$;
  \item For the last task in the pipeline, let us denote it as
    $\tau_n$, we impose that it must complete before its end-to-end
    deadline $D_n \leq D_{e2e}$.
  \end{itemize}
  Such constraints must intersect the combined system to produce the
  final system of constraints.
\end{enumerate}

To give readers an idea how the parameter space of a distributed
system would look like, here is a very simple example, built with the
goal of showing the general methodology without taking too much space.
We consider a system with two processors (and no network), two tasks
$\tau_1$ and $\tau_3$, and one pipeline consisting of two tasks,
$\tau_{21}$ and $\tau_{22}$.
\begin{center}
  \begin{tabular}{|c|ccccc|} \hline Pipeline & Task & $\pi_i$ & Resource 
	& $T_i$ & $\overline{D}_{i} (D_{e2e})$ \\\hline
    	- & $\tau_1$ &  2  & CPU1 & 10 & 4 \\\hline
	\multirow{2}{*}{$P^2$} 
	  & $\tau_{1}^2$ &  1 & CPU1  & \multirow{2}{*}{20} &\multirow{2}{*}{6}\\
	  & $\tau_{2}^2$ &  2 & CPU2  & 		    	   & \\\hline
    	- & $\tau_3$ &  1  & CPU2 & 16 & 16 \\\hline
  \end{tabular}
\end{center}
To make sure that for each task we have one single inequality (see
Equation \eqref{eq:final-single-proc}) we set up the deadlines short
enough so that one schedulability point for each task needs to be
considered, thus avoiding complex disjoints.

Based on the analysis in this section, we derive a set of constraints,
where $J$, $C$ and $D$ are the free variables for the tasks.
\[
\begin{cases}
	J_1 \ge 0, C_1 \ge 0, C_{1}^2 \ge 0, C_{2}^2 \ge 0, J_3 \ge 0, C_{3} \ge 0\\
	D_1 \le 4, D_{3} \le 16 \\
	C_1 + J_{1} \le D_1 \\
	C_{21} + C_{1} \le D_{21} \\
	C_{2}^2 + J_{2}^2 \le D_{2}^2 \\
	C_{3} + C_{2}^2 + J_{2}^2 \le 20 \\
	C_{3} + C_{2}^2 + J_{3} \le D_{3} \\
	J_{1}^2=0, D_{1}^2 \le J_{2}^2, D_{2}^2 \le 6 \\
\end{cases}
\]
In the first two lines, we show the ``trivial'' inequalities: all
values must be non-negative, and every deadline must not exceed the
corresponding maximum deadline specified in the table.  The
inequalities at line 3 and 4 and the inequalities at line 5, 6 and 7
are (reduced) constraints (according to Theorem~\ref{thm:jitter}) on
the schedulability of tasks on processor 1 and 2,
respectively. Finally, the inequalities in the last line are the ones
imposed by the precedence constraints between $\tau_{1}^2$ and
$\tau_{2}^2$.

A real system will produce a much more complex set of constraints. For
each task we will need to prepare a set of disjoint inequalities, that
must be intersect with each other: this may greatly increment the
number of inequalities to be considered. Also, often we need to model
the network. Therefore, we prepared a software tool to automatically
build and analyse the set of inequalities for a distributed system.

\subsection{Implementation}

The analytic method proposed in this section has been implemented in
RTSCAN \cite{RTSCAN-web-page}, a C/C++ library publicly available as
open source code that collects different types of schedulability
tests. The code for the parametric schedulability analysis uses the
PPL (Parma Polyhedra Library) \cite{BagnaraHZ06TR}, a library
specifically designed and optimised to represent and operate on
polyhedra. The library efficiently operates on rational numbers with
arbitrary precision: therefore, in this work we make the assumption
that all variables (computations times, deadlines and jitter) are
defined in the domain of integers. This does not represent a great
problem, since in practice every value is multiple of a real-time
clock expressed as number of ticks.

An evaluation of this tool, and of the complexity of the analysis
presented here, will be presented in Section \ref{sec:experiments}.

\section{The Inverse Method approach}
\label{sec:timed-automata}

\subsection{Parametric Timed Automata} 

Timed Automata are finite-state automata augmented with clocks, i.e.,
real-valued variables increasing uniformly, that are compared within
guards and invariants with timing delays~\cite{AD94}.  Parametric
timed automata (PTAs)~\cite{AHV93} extend timed automata with
parameters, i.e., unknown constants, that can be used in guards and
invariants.

Formally, given a set~$X$ of clocks and a set~$P$ of parameters, a
constraint~$C$ over~$X$ and~$P$ is a conjunction of linear
inequalities on~$X$ and~$P$.  Given a parameter valuation (or
point)~$\pi$, we write~$\pi \models C$ when the constraint where all
parameters within~$C$ have been replaced by their value as in~$\pi$ is
satisfied by a non-empty set of clock valuations.

\begin{definition}
A PTA~$\A$ is \mbox{$(\Sigma, Q, q_{0}, X, P, K, I, \steps)$} with
	$\Sigma$ a finite set of actions,
	$Q$ a finite set of locations,
	$q_{0} \in Q$ the initial location,
	$X$ a set of clocks,
	$P$ a set of parameters,
	$K$ a constraint over~$P$,
	$I$ the invariant assigning to every $q\in Q$ a constraint over~$X$ and~$P$, and
	$\steps$ a step relation consisting of elements $(q,g,a,\rho,q')$, 
	where
	$q,q'\in Q$, $a\in\Sigma$, $\rho\subseteq X$ is the set of clocks to be reset, and
	the guard $g$ is a constraint over~$X$ and~$P$.
\end{definition}
 
The semantics of a PTA~$\A$ is defined in terms of states, i.e.,
couples $(q, C)$ where~$q \in Q$ and~$C$ is a constraint over~$X$
and~$P$.  Given a point~$\pi$, we say that a state $(q, C)$ is
$\pi$-compatible if $\pi \models C$.  Runs are alternating sequences
of states and actions, and traces are time-abstract runs, i.e.,
alternating sequences of \emph{locations} and actions.  The trace set
of~$\A$ corresponds to the traces associated with all the runs
of~$\A$.  Given~$\A$ and~$\pi$, we denote by~$\A[\pi]$ the
(non-parametric) timed automaton where each occurrence of a parameter
has been replaced by its constant value as in~$\pi$.
%
%
One defines $\post_{\A(K)}^i(S)$ as the set of states reachable from a
set~$S$ of states in exactly $i$~steps under~$K$,
and 
$\post_{\A(K)}^*(S)=\bigcup_{i\geq 0 }\post_{\A(K)}^i(S)$.

Detailed definitions on parametric timed automata can be found in,
e.g.,~\cite{AS13}.

The Inverse Method exploits the model of Timed Automata and the
knowledge of a reference point of timing values for which the good
behaviour of the system is known. The method synthesises automatically
a dense zone of points around the reference point, for which the
discrete behaviour of the system, that is the set of all the
admissible sequences of interleaving events, is guaranteed to be the
same.  Although the principle of the inverse method shares
similarities with sensitivity analysis, its algorithm proceeds by
iterative state space exploration. Furthermore, its result comes under
the form of a fully parametric constraint, in contrast to sensitivity
analysis.  By repeatedly applying the method, we are able to decompose
the parameter space into a covering set of ``tiles'', which ensure a
uniform behaviour of the system: it is sufficient to test only one
point of the tile in order to know whether or not the system behaves
correctly on the whole tile.

\subsection{System model with PTAs}

In this section, we show how we modelled a schedulability problem as
defined in \ref{sec:system-model}, similarly to what has been done in
\cite{FLMS-time12}. In the current implementation, we only model
pipelines with end-to-end deadlines no larger than their
periods. Moreover, all pipelines are strictly periodic, and have 0
offset. This means that the results of the parametric analysis
produced by this model are only valid for periodic synchronous
pipelines.

We illustrate our model with the help of an example of two pipelines
$\mathcal{P}^1,\mathcal{P}^2$ with $\mathcal{P}^1 =
\{\tau_1,\tau_2\}$, $\mathcal{P}^2 = \{\tau_3,\tau_4\}$, $p(\tau_1) =
p(\tau_4) = p_1$, $p(\tau_2) =p(\tau_3) = p_2$, $p_1$ being a
preemptive processor and $p_2$ being non-preemptive. We have that
$\pi_1 > \pi_4$ and $\pi_3 > \pi_2$.

In Figure \ref{fig:pipeline}, we show the model of a pipeline. A
pipeline is a sequence of tasks that are to be executed in order: when
a task completes its instance, it instantly activates the next one in
the pipeline. Once every task in the pipeline has completed, the
pipeline waits for the next period to start.

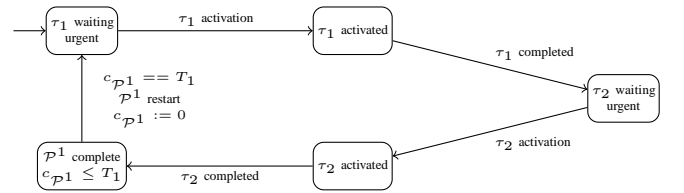
\begin{figure}[ht!]
{
  \centering
  \tiny
  
  \tikzstyle{state}=[rectangle, rounded corners, minimum size=18pt, draw=black, text=black, inner sep=1.5pt,initial text={}]
  \tikzstyle{point}=[draw=none, fill=black, circle]
  
  \begin{tikzpicture}[scale = 1.8, auto, ->, thin] 
    
    \node[initial, state] at (-2,0.5) (0) {\begin{tabular}{@{} c @{} }$\tau_1$ waiting \\ urgent\end{tabular}};
    \node[state] at (0,0.5) (1) {\begin{tabular}{@{} c @{} }$\tau_1$ activated\end{tabular}};
    \node[state] at (2,0) (2) {\begin{tabular}{@{} c @{} }$\tau_2$ waiting \\ urgent\end{tabular}};
    \node[state] at (0,-0.5) (3) {\begin{tabular}{@{} c @{} }$\tau_2$ activated\end{tabular}};
    \node[state] at (-2,-0.5) (4) {\begin{tabular}{@{} c @{} }$\mathcal{P}^1$ complete \\ $c_{\mathcal{P}^1} \leq T_1$\end{tabular}};
    
    \path
    (0)
    edge [above] node {$\tau_1$ activation} (1)
    
    (1)
    edge [above right] node {$\tau_1$ completed} (2)
    
    (2)
    edge [below right] node {$\tau_2$ activation} (3)
    
    (3)
    edge [below] node {$\tau_2$ completed} (4)
    
    (4)
    edge [right] node {\begin{tabular}{c}$c_{\mathcal{P}^1} == T_1$ \\ $\mathcal{P}^1$ restart\\ $c_{\mathcal{P}^1}:=0$  \end{tabular}} (0)
    ;
    
  \end{tikzpicture}
  
}

\caption{PTA modelling a pipeline $\mathcal{P}^1$ with two tasks $\tau_1,\tau_2$}
\label{fig:pipeline}
\end{figure}

In Figure \ref{fig:preemptive}, we present how we model a preemptive
processor. The processor can be \emph{idle}, waiting for a task
activations. As soon as a request has been received, it moves to one
of the states where the corresponding higher priority task is running.
If it receives another activation request, it moves to the state
corresponding to the highest priority task running. Moreover, while a
task executes, the scheduler automaton checks if the corresponding
pipeline misses its deadline. In the case of a deadline miss, the
processor moves to a special failure state and stops any further
computation.

\begin{figure}[ht!]
{
  \centering
  \tiny
  
  \tikzstyle{state}=[rectangle, rounded corners, minimum size=18pt, draw=black, text=black, inner sep=1.5pt,initial text={}]
  \tikzstyle{point}=[draw=none, fill=black, circle]
  
  \begin{tikzpicture}[scale = 1.8, auto, ->, thin] 
    
    \node[initial, state] at (-1.5,0) (0) {\begin{tabular}{@{} c @{} }Idle\\ $c_{\tau_1}$,$c_{\tau_4}$ stopped \end{tabular}};
    \node[state] at (0,1) (1) {\begin{tabular}{@{} c @{} }$\tau_1$ running\\ $c_{\tau_4}$ stopped \end{tabular}};
    \node[state] at (0,-1) (2) {\begin{tabular}{@{} c @{} }$\tau_4$ running \\ $c_{\tau_1}$ stopped \end{tabular}};
    \node[state] at (.8,0) (12) {\begin{tabular}{@{} c @{} }$\tau_1$ running\\$\tau_4$ activated\\$c_{\tau_4}$ stopped\end{tabular}};
    \node[state] at (2.3,0) (dm) {\begin{tabular}{@{} c @{} }Deadline missed\end{tabular}};
    
    \path
    (0)
    edge [bend left=60, above left] node {$\tau_1$ activation} (1)
    edge [bend left, below left] node {$\tau_4$ activation} (2)
    
    (1)
    edge [bend left, above left] node {\begin{tabular}{c}$c_{\tau_1} == C_1$ \\ $\tau_1$ completed\\ $c_{\tau_1}:=0$  \end{tabular}} (0)
    edge [above right] node {$\tau_4$ activation} (12)
    edge [bend left, above right] node {\begin{tabular}{c}$c_{\mathcal{P}^1} > D_{e2e}^1$ \\ Deadline miss \end{tabular}} (dm)
    
    (2)
    edge [bend left, below left] node {\begin{tabular}{c}$c_{\tau_4} == C_4$ \\ $\tau_4$ completed\\ $c_{\tau_4}:=0$  \end{tabular}} (0)
    edge [bend left, below right] node {$\tau_1$ activation} (12)
    edge [bend right,below right, out=-70,in=-90] node {\begin{tabular}{c}$c_{\mathcal{P}^2} > D_{e2e}^2$ \\ Deadline miss \end{tabular}} (dm)

    (12)
    edge [bend left, below right] node {\begin{tabular}{c}$c_{\tau_1} == C_1$ \\ $\tau_1$ completed\\ $c_{\tau_1}:=0$  \end{tabular}} (2)
    edge [below] node {\begin{tabular}{c}$c_{\mathcal{P}^1} > D_{e2e}^1 $ \\ or $c_{\mathcal{P}^2} > D_{e2e}^2$\\ Deadline miss  \end{tabular}} (dm)
    
    ;
    
  \end{tikzpicture}
  
}

\caption{PTA modelling a preemptive processor with two tasks $\tau_1,\tau_4$}
\label{fig:preemptive}
\end{figure}
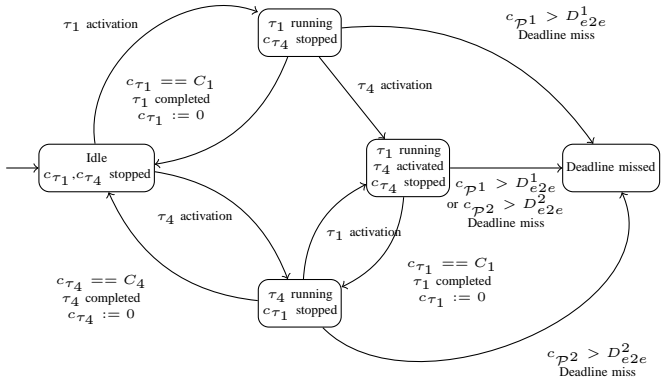

In Figure \ref{fig:nonpreemptive}, we present the model a
non-preemptive processor. Similarly to the previous case, the
processor can be idle, waiting for an activation request. As soon as a
request as been received it moves to a corresponding state, setting a
token corresponding to the activated task to $1$.  If another request
is sent at the same time, it sets a corresponding token to $1$, and
moves to the state where the highest priority task will be running.
Once a task is completed, the processor set the corresponding token to
$0$, and according to the token set to $1$, moves to the state where
the highest priority task will be running. Similarly to the previous
case, while a task executes, the automaton checks for deadline misses,
and in that case it stops any further computation by moving to a
special failure state.

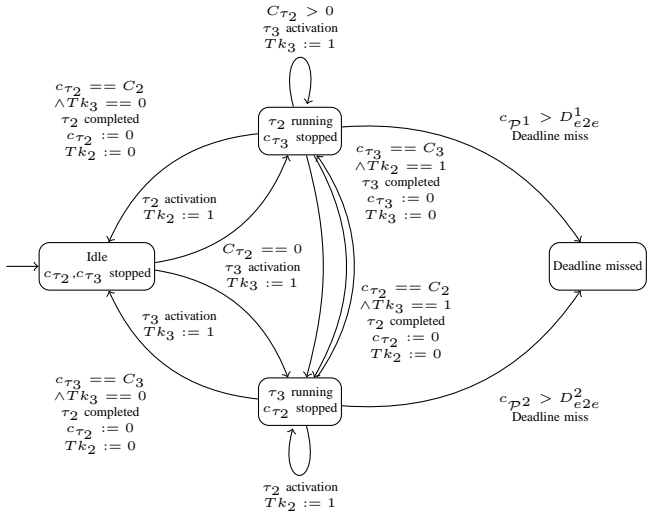
\begin{figure}[ht!]
{
  \centering
  \tiny
  
  \tikzstyle{state}=[rectangle, rounded corners, minimum size=18pt, draw=black, text=black, inner sep=1.5pt,initial text={}]
  \tikzstyle{point}=[draw=none, fill=black, circle]
  
  \begin{tikzpicture}[scale = 1.8, auto, ->, thin] 
    
    \node[initial, state] at (-2.5,0) (0) {\begin{tabular}{@{} c @{} }Idle\\ $c_{\tau_2}$,$c_{\tau_3}$ stopped \end{tabular}};
    \node[state] at (-1,1) (1) {\begin{tabular}{@{} c @{} }$\tau_2$ running\\ $c_{\tau_3}$ stopped \end{tabular}};
    \node[state] at (-1,-1) (2) {\begin{tabular}{@{} c @{} }$\tau_3$ running \\ $c_{\tau_2}$ stopped \end{tabular}};
    \node[state] at (1.2,0) (dm) {\begin{tabular}{@{} c @{} }Deadline missed\end{tabular}};
    
    \path
    (0)
    edge [bend right, above left] node {\begin{tabular}{c}$\tau_2$ activation\\$Tk_2:=1$ \end{tabular}} (1)
    edge [bend left, below left] node {\begin{tabular}{c}$\tau_3$ activation\\$Tk_3:=1$ \end{tabular}} (2)
    (1)
    edge [bend right, above left] node {\begin{tabular}{c}$c_{\tau_2} == C_2$\\ $\wedge Tk_3==0$ \\ $\tau_2$ completed\\ $c_{\tau_2}:=0$\\$Tk_2:=0$  \end{tabular}} (0)
    edge [loop above, above] node {\begin{tabular}{c}$C_{\tau_2} > 0$ \\ $\tau_3$ activation \\ $Tk_3:=1$ \end{tabular}} (1)
    edge [bend left = 15,left] node {\begin{tabular}{c}$C_{\tau_2} == 0$ \\ $\tau_3$ activation \\ $Tk_3:=1$ \end{tabular}} (2)
    edge [bend left = 30, right] node[near end] {\begin{tabular}{c}$c_{\tau_2} == C_2$\\ $\wedge Tk_3==1$ \\ $\tau_2$ completed\\ $c_{\tau_2}:=0$\\$Tk_2:=0$  \end{tabular}} (2)
    
    edge [bend left, above right] node {\begin{tabular}{c}$c_{\mathcal{P}^1} > D_{e2e}^1$ \\ Deadline miss \end{tabular}} (dm)
    (2)
    edge [bend left, below left] node {\begin{tabular}{c}$c_{\tau_3} == C_3$\\ $\wedge Tk_3==0$ \\ $\tau_2$ completed\\ $c_{\tau_2}:=0$\\$Tk_2:=0$  \end{tabular}} (0)
    edge [loop below, below] node {\begin{tabular}{c} $\tau_2$ activation \\ $Tk_2:=1$ \end{tabular}} (2)
    edge [bend right = 35, right] node[very near end] {\begin{tabular}{c}$c_{\tau_3} == C_3$\\ $\wedge Tk_2==1$ \\ $\tau_3$ completed\\ $c_{\tau_3}:=0$\\$Tk_3:=0$  \end{tabular}} (1)
    edge [bend right, below right] node {\begin{tabular}{c}$c_{\mathcal{P}^2} > D_{e2e}^2$ \\ Deadline miss \end{tabular}} (dm)


%
    ;
    
  \end{tikzpicture}
  
}

\caption{PTA modelling a non-preemptive processor with two tasks $\tau_2,\tau_3$}
\label{fig:nonpreemptive}
\end{figure}

Since we model periodic pipelines, and the model explores all possible
traces, we expect that schedulability region will be larger that the
one obtained with other techniques which only consider sporadic
pipelines (like the analysis proposed in Section
\ref{sec:analytic}). An assessment of this difference is provided in
the next section.

\section{Evaluation}
\label{sec:experiments}

We evaluated the effectiveness and the running time of three different
tools for parametric schedulability analysis: the RTSCAN tool, which
implements the analytic method described in Section
\ref{sec:analytic}; the IMITATOR tool \cite{FLMS-time12}, described in
Section \ref{sec:timed-automata}; and the MAST tool
\cite{MAST-web-page}.

We highlight that the three tools are implemented in different
languages, and use different ways to optimise the analysis; RTSCAN is
implemented in C/C++ and uses on the PPL library; IMITATOR is
implemented in OCaml, but it also used the PPL libraries for building
regions; finally, MAST is implemented in Ada. 

For MAST, we have selected the ``Offset Based analysis'', proposed in
\cite{Palencia1998}. For IMITATOR, we consider all pipelines as
strictly periodic, and only deadlines less than periods. We evaluated
the tools on two different test cases, in increasing order of
complexity. We will first present the results, in terms of
schedulability regions, for two different test cases. In order to
simplify the visualisation of the results, for each test case, we will
present the 2D region of two parameters only: however, all three
methods are general and can be applied to any number of parameters.

In Section \ref{sec:execution-times}, after discussing some important
implementation details, we will and present the execution times of the
three tools.

\subsection{Test case 1}

The first test case has been adapted from \cite{Palencia1998} (we
reduced the computation times of some tasks to position the system in
a interesting schedulability region). It consists of 2 processors,
connected by a CAN bus, three simple periodic tasks and one
pipeline. The parameters are listed in Table \ref{tab:tc1}. Processor
1 and 3 model two different computational nodes that are scheduled by
preemptive fixed priority, and Processor 2 models a CAN bus with
non-preemptive fixed priority policy. The only pipeline models a
remote procedure call from CPU 1 to CPU 3. All tasks have deadlines
equal to periods, and also the pipeline has end-to-end deadline equal
to its period. Only two messages are sent on the network, and if the
pipeline is schedulable, they cannot interfere with each other. We
wish to perform parametric schedulability analysis with respect to
$C_1$ and $C_1^1$.

\begin{table}
  \centering
  \begin{center}
\begin{tabular}{|c|c|c|c|c|c|c|} \hline
  Pipeline/Task    & $T$ & $D_{e2e}$ & Tasks      & $C$ & $\pi$ & $p$ \\\hline
  $\tau_1$         &  20 &   20     &   -        &  free  &   9   &  1  \\\hline
  \multirow{5}{*}{$P^1$} & \multirow{5}{*}{ 150 } & \multirow{5}{*}{ 150 } 
                                    & $\tau_1^1$ &  free  &   3   & 1   \\     
                   &     &          & $\tau_2^1$ &  10  &   9   & 2   \\
                   &     &          & $\tau_3^1$ &  8   &   5   & 3   \\
                   &     &          & $\tau_4^1$ &  15  &   2   & 2   \\
                   &     &          & $\tau_5^1$ &  25  &   2   & 1   \\
                   \hline
  $\tau_2$         &  30 &   30     &   -         &  6  &   9   &  3  \\\hline
  $\tau_3$         &  200 & 200     &   -         &  40 &   2   &  3  \\\hline                   
\end{tabular}
\end{center}
\caption{Test Case 1: one pipeline with deadline equal to period.}
\label{tab:tc1}
\end{table}

The resulting regions of schedulability from the tools RTSCAN and MAST
are reported in Figure \ref{fig:palencia-rtscan}, whereas the region
produced by IMITATOR is reported in Figure \ref{fig:palencia-imitator}
in green (the non schedulable region is also painted in red). The small
white triangles are regions which do not contain any integer point,
and have not been explored by the IMITATOR tool.

In this particular test, RTSCAN dominates MAST. After some debugging,
we discovered that the analysis algorithm currently implemented in
MAST does not consider the fact that the two messages $\tau_2^1$ and
$\tau_4^1$ cannot interfere with each other, and instead considers a
non-null blocking time on the network. This is probably a small bug in
the MAST implementation that we hope will be solved in a future
version.

Also, as expected, the region computed by IMITATOR dominates the other
two tools includes the other two regions. The reason is in the
different model of computation: IMITATOR considers fully periodic and
synchronous pipelines, therefore it produces all possible traces that
can be generated in this case. Both RTSCAN and MAST, instead, compute
upper bounds on the interference that a task can suffer from higher
priority task and from blocking time of lower priority
tasks. Therefore, they can only provide a sufficient analysis. 

\begin{figure}[t]
  \centering
  \psfrag{C11}[c][Bl][1][0]{$C_1^1$}
  \psfrag{C21}[c][Bl][1][180]{$C_1$}
  \includegraphics[width=\columnwidth]{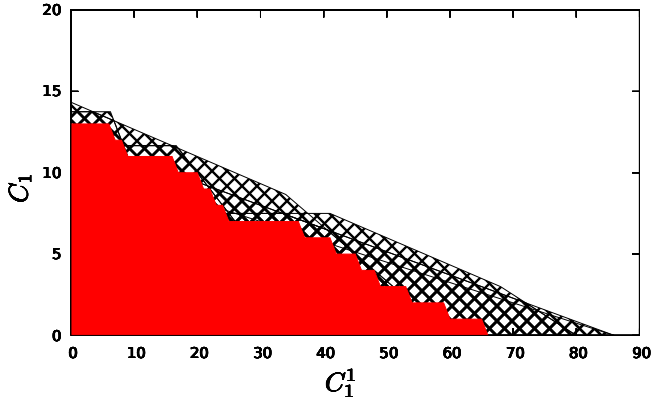}
  \caption{Schedulability regions for test case 1, produced by RTSCAN
    (hatched pattern) and MAST (filled pattern)}
  \label{fig:palencia-rtscan}
\end{figure}

\begin{figure}[h]
  \centering
  \psfrag{C\_task\_11}[c][Bl][1][180]{$C_1$}
  \psfrag{C\_task\_21}{$C_1^1$}
  \psfrag{test.imi}{~}
  \includegraphics[width=.75\columnwidth, height=0.25\textheight]{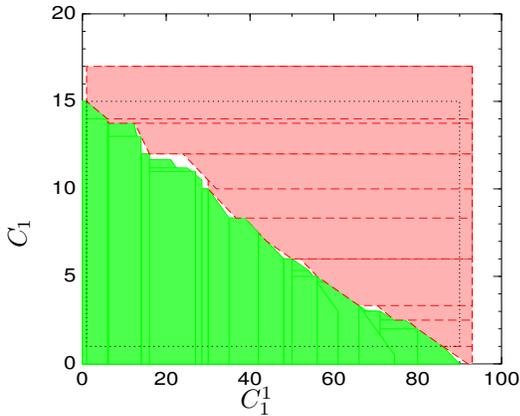}
  \caption{Schedulability regions for test case 1, produced by IMITATOR (green (lower) region)}
  \label{fig:palencia-imitator}
\end{figure}

\subsection{Test case 2}

The second test case is taken from
\cite{Wandeler:2006:SAE:1177177.1177184}. It consists of two pipelines
on 3 processors (with id 1, 3 and 4) and one network (with id 2).  We
actually consider two versions of this test case: in the first version
(a) pipeline $P^1$ is periodic with period $200msec$ and end-to-end
deadline equal to the period. In the second version (b), the period of
the first pipeline is reduced to $30 msec$ 
(as in the original specification in
\cite{Wandeler:2006:SAE:1177177.1177184}).  The full set of parameters
is reported in Table~\ref{tab:tc2}, where all values are expressed in
microseconds. We perform parametric analysis on $C_5^1$ and $C_1^2$.

\begin{table}
  \centering
  \begin{center}
\begin{tabular}{|c|c|c|c|c|c|c|c|} \hline
  Pipeline         & $T$ & $D_{e2e}$ & Tasks      & $C$     & $\pi$ & $p$ \\\hline
  \multirow{5}{*}{$P^1$} & \multirow{5}{1cm}{ 200,000 (30,000) } & \multirow{5}{*}{ 200,000 } 
                                    & $\tau_1^1$ &  4,546   &   10  & 1    \\     
                   &     &          & $\tau_2^1$ &  445     &   10  & 2    \\
                   &     &          & $\tau_3^1$ &  9,091   &   10  & 4    \\
                   &     &          & $\tau_4^1$ &  445     &   9   & 2    \\
                   &     &          & $\tau_5^1$ &  free    &   9   & 1    \\
                   \hline
  \multirow{5}{*}{$P^2$} & \multirow{5}{1cm}{ 300,000 } & \multirow{5}{*}{ 100,000 } 
                                    & $\tau_1^2$ &  free     &   9   & 4    \\     
                   &     &          & $\tau_2^2$ &  889     &   8   & 2    \\
                   &     &          & $\tau_3^2$ &  44,248  &   10  & 3    \\
                   &     &          & $\tau_4^2$ &  889     &   7   & 2    \\
                   &     &          & $\tau_5^2$ &  22,728  &   8   & 1    \\
                   \hline
\end{tabular}
\end{center}

\caption{Test Case 2: two pipelines on 3 processors}
\label{tab:tc2}
\end{table}

For version (a) we run all tools and we report the regions of
schedulability in Figure \ref{fig:thiele-a-rtscan} for RTSCAN and
MAST. In this case, MAST dominated RTSCAN. The reason is due to the
offset based analysis methodology used in MAST, which reduces the
interference on one task from other tasks belonging to the same
pipeline. RTSCAN does not implement such an optimisation (it will be
the topic of future extensions) and hence it is more pessimistic.

\begin{figure}[t]
  \centering
  \psfrag{C21}[c][Bl][1][180]{$C_5^1$}
  \psfrag{C15}[c][Bl][1][0]{$C_1^2$}
  \includegraphics[width=\columnwidth]{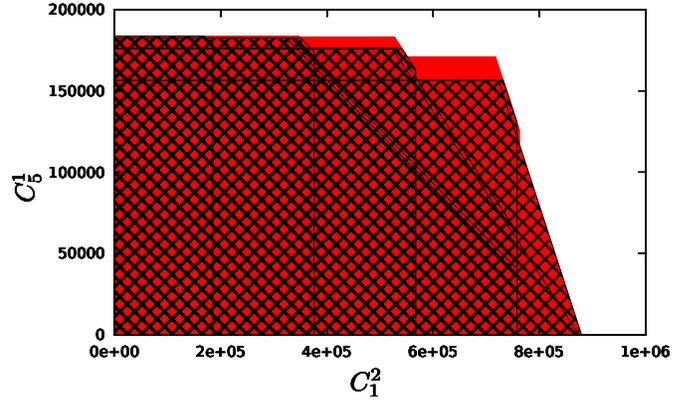}
  \caption{Schedulability regions for test case 2a, produced by RTSCAN
    (hatched) and MAST (filled)}
  \label{fig:thiele-a-rtscan}
\end{figure}

The results from IMITATOR are shown in Figure
\ref{fig:thiele-a-imitator}. Again, the region produced by IMITATOR
dominates the one produced by the other two tools.

\begin{figure}[b]
  \centering
  \psfrag{C\_task\_21}{$C_1^2$}
  \psfrag{C\_task\_15}[c][Bl][1][180]{$C_5^1$}
  \psfrag{test.imi}{~}
  \includegraphics[width=.75\columnwidth]{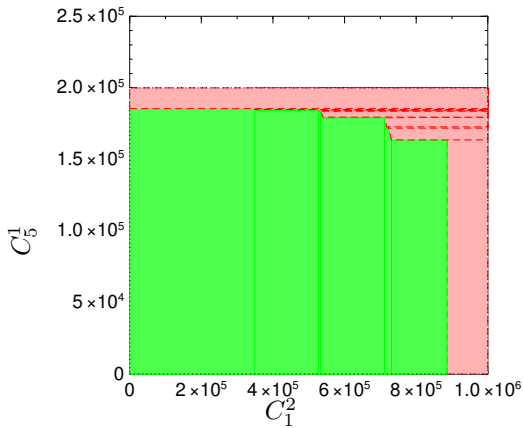}
  \caption{Schedulability regions for test case 2a, produced by IMITATOR (green region)}
  \label{fig:thiele-a-imitator}
\end{figure}

For version (b) we run only RTSCAN and MAST, because in the current
version of IMITATOR we can only model constrained deadline
systems. The results for version (b) are reported in Figure
\ref{fig:thiele-b-rtscan}. In this case, MAST dominates RTSCAN. Again,
this is probably due to the fact that MAST implements the offset-based
analysis.

\begin{figure}[t]
  \centering
  \psfrag{XA}[c][Bl][1][0]{$C_1^2$}
  \psfrag{MYCYA}[c][Bl][1][0]{$C_5^1$}
  \psfrag{test.imi}{~}
  \includegraphics[width=\columnwidth]{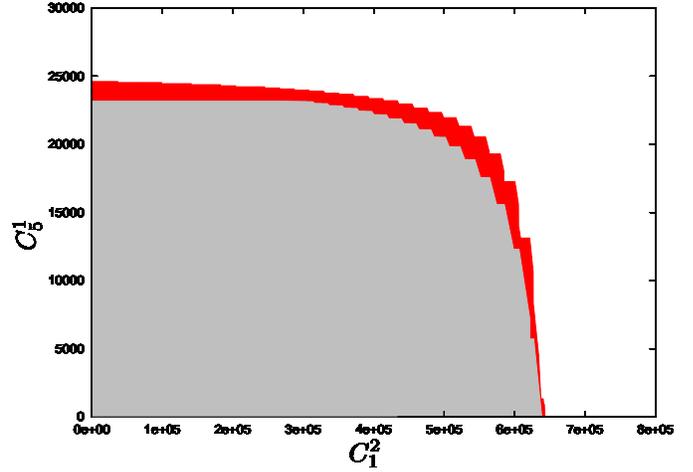}
  \caption{Schedulability regions for test case 2b, produced by RTSCAN
    (grey filled) and MAST (red filled)}
  \label{fig:thiele-b-rtscan}
\end{figure}

\subsection{Execution times}
\label{sec:execution-times}

Before looking at the execution times of the three tools in the three
different test cases, it is worth to discuss some detail about their
implementation. First of all, all the three tools are single threaded,
therefore we did not use any kind of parallelisation technique in
order to have a fair comparison. The RTSCAN tool uses the technique
described in Section \ref{sec:analytic}, and as a result it produces a
disjunction of convex regions, each one corresponds to a set of
inequalities in AND. Typically, the number of convex regions produced
by the tool is relatively small. Also, there is no need to ``explore''
the space of feasible points, as these regions are naturally obtained
from the problem constraints.

IMITATOR also produces a disjunction of convex regions. However, these
regions are typically smaller and disjoints. Moreover, to produce a
region, IMITATOR needs to start from a candidate point on which to
perform a sensitivity analysis. More specifically, it works as
follows: 1) it starts from a random point (typically the centre of the
interval) and computes a region around it. Then, it searches for the
next candidate point outside of the already found regions. The key
factor here is how this search is performed. Currently, IMITATOR
search for a candidate point in the neighbourhood of the current
region. This is a very general strategy that works for any kind of
PTA. However, the particular structure of schedulability problems
would probably require an ad-hoc exploration algorithm.

MAST can perform schedulability analysis given a full set of parameter
values, and it returns either a positive or a negative response. In
addition, MAST can perform sensitivity analysis on one parameter
(called \emph{slack computation} in the tool), using binary search on
a possible interval of values. This latter strategy can be used to
implement parametric analysis: we select a interval of values for each
free parameter that we wish to analyse. Then, we perform a cycle on
all values of one parameter (with a predefined step) and we ask MAST
to compute the interval of feasible values for the other parameter.

This may not be the smartest way to proceed: it is possible, for
example, to implement binary search on the full 2D space of free
parameters to accelerate the execution time of the tool. We defer the
implementation of such an algorithm as a future extension.

All experiments have been performed onto a PC with 8Gb of RAM, an Intel
Core I7 quad-core processor, working at 800 Mhz per processor. 

We are now ready to present and discuss the execution times of the
tools in the three test cases, which are reported in Table
\ref{tab:exec}, together with the length of the hyperperiod for the
test cases. As you can see, for small problems RTSCAN performs very
well.  In test case 2b, the execution time of RTSCAN is much larger
than the one obtained from test case 2a. This is due to the fact that
in test case 2b, one pipeline has end-to-end deadline greater than the
period, and therefore RTSCAN needs to compute many more inequalities
(for all points in the hyperperiod).

As for MAST, in test cases 2a and 2b (where the time units are
expressed in microseconds, and therefore are quite large), the search
has been run with a step of 100 for a good compromise between
precision and execution time. However, we believe that a smarter
algorithm for exploring the space of parameters can really improve the
overall execution time.

Finally, IMITATOR greatly suffers from a similar problem. We observed
that the tool spends a few seconds for computing the schedulability
region around each point. However, the regions are quite small, and
there are many of them: for example, in test case 2a IMITATOR analysed
257 regions. Also, it spends a large amount of time in searching for
neighbourhood points. Therefore, we believe that a huge improvement in
the computation time of IMITATOR can be achieved by coming up with a
smarter way of exploring the schedulability space.

\begin{table}[h]
  \centering
  \begin{tabular}{|c|c|c|c|c|}\hline
    Test Case & Hyperperiod   &  RTSCAN   &    MAST     &    IMITATOR   \\\hline
        1     &    600       &  0.27s     &    7.19 s   &     19m42     \\
        2a    &    600,000   &  0.47s     &    40m13s   &     4h        \\
         2b    &    300,000 &  1 m 47s   &    33m19s   &        --     \\
        \hline
  \end{tabular}
  \caption{Execution times of the tools in the three test cases}
  \label{tab:exec}
\end{table}

\section{Conclusions and Future Works}
\label{sec:conclusions}

In this paper we presented two different approaches to perform
parametric analysis of distributed real-time systems: one based on
analytic methods of classic schedulability analysis; the other one
based on model checking of PTA. We compared the two approached with
classical holistic analysis.

The results are promising, and we plan to extend this work along
different directions. Regarding the analytic method, we want to
enhance the analysis including static and dynamic offsets, following
the approach of \cite{Palencia1998}. Also, in the future we will to
extend the model to consider mutually exclusive semaphores and
multiprocessor scheduling.

Regarding IMITATOR, we plan to improve the algorithm to explore the
space of parameters: one promising idea is to use the analytic method
to find an initial approximation of the feasible space, and then
extend the border of the space using PTAs. We also plan to collaborate
with the team at Universidad de Cantabria that develops the MAST tool
on novel algorithms for exploring an N-dimensions parameter space.

\section{Acknowledgements}
\label{sec:acks}

We would like to express our gratitude to Michael González Harbour and
Juan M. Rivas, from the Universidad de Cantabria, for their support to
installing and using the MAST tool.

The research leading to these results has received funding from the
European Union Seventh Framework Programme (FP7/2007-2013) under grant
agreement No. 246556.

\balance

\bibliographystyle{IEEEtran}
\bibliography{all}

\begin{thebibliography}{10}
\providecommand{\url}[1]{#1}
\csname url@rmstyle\endcsname
\providecommand{\newblock}{\relax}
\providecommand{\bibinfo}[2]{#2}
\providecommand\BIBentrySTDinterwordspacing{\spaceskip=0pt\relax}
\providecommand\BIBentryALTinterwordstretchfactor{4}
\providecommand\BIBentryALTinterwordspacing{\spaceskip=\fontdimen2\font plus
\BIBentryALTinterwordstretchfactor\fontdimen3\font minus
  \fontdimen4\font\relax}
\providecommand\BIBforeignlanguage[2]{{%
\expandafter\ifx\csname l@#1\endcsname\relax
\typeout{** WARNING: IEEEtran.bst: No hyphenation pattern has been}%
\typeout{** loaded for the language `#1'. Using the pattern for}%
\typeout{** the default language instead.}%
\else
\language=\csname l@#1\endcsname
\fi
#2}}

\bibitem{Bini-thesis}
E.~Bini, ``The design domain of real-time systems,'' Ph.D. dissertation, Scuola
  Superiore Sant'Anna, 2004.

\bibitem{Bin07b}
E.~Bini, M.~Di~Natale, and G.~C. Buttazzo, ``Sensitivity analysis for
  fixed-priority real-time systems,'' \emph{Real-Time Systems}, Apr. 2007, dOI:
  10.1007/s11241-006-9010-1.

\bibitem{Palopoli-rtss08}
A.~Cimatti, L.~Palopoli, and Y.~Ramadian, ``Symbolic computation of
  schedulability regions using parametric timed automata,'' 30 2008-Dec. 3
  2008, pp. 80--89.

\bibitem{AFKS12}
\BIBentryALTinterwordspacing
{\'E}.~Andr{\'e}, L.~Fribourg, U.~K{\"u}hne, and R.~Soulat, ``{IMITATOR}~2.5:
  A~tool for analyzing robustness in scheduling problems,'' in
  \emph{{P}roceedings of the 18th {I}nternational {S}ymposium on {F}ormal
  {M}ethods ({FM}'12)}, ser. Lecture Notes in Computer Science,
  D.~Giannakopoulou and D.~M{\'e}ry, Eds., vol. 7436.\hskip 1em plus 0.5em
  minus 0.4em\relax Paris, France: Springer, Aug. 2012, pp. 33--36. [Online].
  Available: \url{http://www.lsv.ens-cachan.fr/Publis/PAPERS/PDF/AFKS-fm12.pdf}
\BIBentrySTDinterwordspacing

\bibitem{Tindell:1994:HSA:195612.195618}
\BIBentryALTinterwordspacing
K.~Tindell and J.~Clark, ``Holistic schedulability analysis for distributed
  hard real-time systems,'' \emph{Microprocess. Microprogram.}, vol.~40, no.
  2-3, pp. 117--134, Apr. 1994. [Online]. Available:
  \url{http://dx.doi.org/10.1016/0165-6074(94)90080-9}
\BIBentrySTDinterwordspacing

\bibitem{Davis07}
\BIBentryALTinterwordspacing
R.~I. Davis, A.~Burns, R.~J. Bril, and J.~J. Lukkien,
  ``\BIBforeignlanguage{English}{Controller area network (can) schedulability
  analysis: Refuted, revisited and revised},''
  \emph{\BIBforeignlanguage{English}{Real-Time Systems}}, vol.~35, pp.
  239--272, 2007. [Online]. Available:
  \url{http://dx.doi.org/10.1007/s11241-007-9012-7}
\BIBentrySTDinterwordspacing

\bibitem{AS13}
{\'E}.~Andr{\'e} and R.~Soulat, \emph{The Inverse Method}.\hskip 1em plus 0.5em
  minus 0.4em\relax ISTE Ltd and John Wiley \& Sons Inc., 2013, 176~pages.

\bibitem{FLMS-time12}
\BIBentryALTinterwordspacing
L.~Fribourg, D.~Lesens, P.~Moro, and R.~Soulat, ``Robustness analysis for
  scheduling problems using the inverse method,'' in \emph{{P}roceedings of the
  19th {I}nternational {S}ymposium on {T}emporal {R}epresentation and
  {R}easoning ({TIME}'12)}, M.~Reynolds, P.~Terenziani, and B.~Moszkowski,
  Eds.\hskip 1em plus 0.5em minus 0.4em\relax Leicester, UK: {IEEE} Computer
  Society Press, Sept. 2012, pp. 73--80. [Online]. Available:
  \url{http://www.lsv.ens-cachan.fr/Publis/PAPERS/PDF/FLMS-time12.pdf}
\BIBentrySTDinterwordspacing

\bibitem{MAST-2001}
M.~González~Harbour, J.~Gutiérrez, J.~Palencia, and J.~Drake, ``Mast:
  Modeling and analysis suite for real-time applications,'' in
  \emph{Proceedings of the Euromicro Conference on Real-Time Systems}, Delft,
  The Netherlands, June 2001.

\bibitem{MAST-web-page}
U.~d.~C. Grupo de Computadores~y Tiempo~Real, ``Mast: Modeling and analysis
  suite for real-time applications,'' \url{http://mast.unican.es/}.

\bibitem{Bin04b}
E.~Bini and G.~C. Buttazzo, ``Schedulability analysis of periodic fixed
  priority systems,'' \emph{IEEE Transactions on Computers}, vol.~53, no.~11,
  pp. 1462--1473, Nov. 2004.

\bibitem{Leh89}
J.~Lehoczky, L.~Sha, and Y.~Ding, ``{The Rate Monotonic Scheduling Algorithm:
  Exact Characterization and Average Case Behavior},'' in \emph{Proc. Real Time
  Systems Symposium}, 1989, pp. 166--171.

\bibitem{Palencia1998}
J.~C. Palencia and M.~Gonzalez~Harbour, ``Schedulability analysis for tasks
  with static and dynamic offsets,'' in \emph{Proc. 19th IEEE Real-Time Systems
  Symp.}, 1998, pp. 26--37.

\bibitem{Symtas-05}
R.~Henia, A.~Hamann, M.~Jersak, R.~Racu, K.~Richter, and R.~Ernst, ``System
  level performance analysis - the symta/s approach,'' \emph{Computers and
  Digital Techniques, IEE Proceedings -}, vol. 152, no.~2, pp. 148 -- 166, mar
  2005.

\bibitem{Richter:2002:EMI:882452.874327}
\BIBentryALTinterwordspacing
K.~Richter and R.~Ernst, ``Event model interfaces for heterogeneous system
  analysis,'' in \emph{Proceedings of the conference on Design, automation and
  test in Europe}, ser. DATE '02.\hskip 1em plus 0.5em minus 0.4em\relax
  Washington, DC, USA: IEEE Computer Society, 2002, pp. 506--. [Online].
  Available: \url{http://dl.acm.org/citation.cfm?id=882452.874327}
\BIBentrySTDinterwordspacing

\bibitem{Wandeler:2006:SAE:1177177.1177184}
\BIBentryALTinterwordspacing
E.~Wandeler, L.~Thiele, M.~Verhoef, and P.~Lieverse, ``System architecture
  evaluation using modular performance analysis: a case study,'' \emph{Int. J.
  Softw. Tools Technol. Transf.}, vol.~8, no.~6, pp. 649--667, Oct. 2006.
  [Online]. Available: \url{http://dx.doi.org/10.1007/s10009-006-0019-5}
\BIBentrySTDinterwordspacing

\bibitem{thiele2000real}
L.~Thiele, S.~Chakraborty, and M.~Naedele, ``Real-time calculus for scheduling
  hard real-time systems,'' in \emph{Circuits and Systems, 2000. Proceedings.
  ISCAS 2000 Geneva. The 2000 IEEE International Symposium on}, vol.~4.\hskip
  1em plus 0.5em minus 0.4em\relax IEEE, 2000, pp. 101--104.

\bibitem{Pal13}
T.~Le, L.~Palopoli, R.~Passerone, and Y.~Ramadian, ``Timed-automata based
  schedulability analysis for distributed firm real-time systems: a case
  study,'' \emph{accepted to Interntional Journal on Software Tools for
  Technology Transfer}, --.

\bibitem{seto98}
D.~Seto, D.~P. Lehoczky, and L.~Sha, ``Task period selection and schedulability
  in real-time systems,'' in \emph{RTSS}, Dec 1998.

\bibitem{Bini-2004-ID20}
E.~Bini and G.~C. Buttazzo, ``Schedulability analysis of periodic fixed
  priority systems,'' \emph{IEEE TC}, vol.~53, pp. 1462--1473, 2004.

\bibitem{Bril:07}
\BIBentryALTinterwordspacing
R.~J. Bril, J.~J. Lukkien, and W.~F.~J. Verhaegh, ``Worst-case response time
  analysis of real-time tasks under fixed-priority scheduling with deferred
  preemption revisited,'' in \emph{Proceedings of the 19th Euromicro Conference
  on Real-Time Systems}, ser. ECRTS '07.\hskip 1em plus 0.5em minus 0.4em\relax
  Washington, DC, USA: IEEE Computer Society, 2007, pp. 269--279. [Online].
  Available: \url{http://dx.doi.org/10.1109/ECRTS.2007.38}
\BIBentrySTDinterwordspacing

\bibitem{George:96}
L.~George, D.~D. Voluceau, and B.~L. C.~C. (france, ``Preemptive and
  non-preemptive real-time uni-processor scheduling,'' 1996.

\bibitem{RTSCAN-web-page}
G.~Lipari and Y.~Sun, ``Real-time schedulability analyzer (rtscan),'' Web page:
  \url{https://github.com/glipari/rtscan}, January 2013.

\bibitem{BagnaraHZ06TR}
R.~Bagnara, P.~M. Hill, and E.~Zaffanella, ``The {Parma Polyhedra Library}:
  Toward a complete set of numerical abstractions for the analysis and
  verification of hardware and software systems,'' Dipartimento di Matematica,
  Universit\`a di Parma, Italy, Quaderno 457, 2006, available at
  \url{http://www.cs.unipr.it/Publications/}. Also published as {\tt
  arXiv:cs.MS/0612085}, available from \url{http://arxiv.org/}.

\bibitem{AD94}
R.~Alur and D.~L. Dill, ``A theory of timed automata,'' \emph{Theoretical
  Computer Science}, vol. 126, no.~2, pp. 183--235, Apr. 1994.

\bibitem{AHV93}
R.~Alur, T.~A. Henzinger, and M.~Y. Vardi, ``Parametric real-time reasoning,''
  in \emph{Proceedings of the twenty-fifth annual ACM symposium on Theory of
  computing}, ser. STOC'93.\hskip 1em plus 0.5em minus 0.4em\relax New York,
  NY, USA: ACM, 1993, pp. 592--601.

\end{thebibliography}

\end{document}